\documentclass[twocolumn,amsmath,amssymb, prl, superscriptaddress]{revtex4-1}
\usepackage{color}
\usepackage{graphicx}
\usepackage{dcolumn}
\usepackage{bm}
\usepackage{mathtools}
\usepackage{verbatim}
\usepackage{hyperref}
\definecolor{darkred}  {rgb}{0.5,0,0}
\definecolor{darkblue} {rgb}{0,0,0.5}
\definecolor{darkgreen}{rgb}{0,0.5,0}
\hypersetup{
  pdftitle = {Schmidt Rank},
  pdfauthor = {},
  colorlinks = true,
  urlcolor  = blue,         
  linkcolor = darkblue,     
  citecolor = darkgreen,    
  filecolor = darkred       
}


\newtheorem{theorem}{\textbf{Theorem}}
\newtheorem{lemma}{\textbf{Lemma}}
\newtheorem{corollary}{\textbf{Corollary}}
\newtheorem{proposition}{\textbf{Proposition}}

\newcommand{\ket}[1]{|#1\rangle}
\newcommand{\bra}[1]{\langle#1|}

\newcommand{\op}[2]{|#1\rangle\langle #2|}
\newcommand{\id}{\text{id}}
\newcommand{\mbb}{\mathbb}
\newcommand{\mc}{\mathcal}
\newcommand{\mbf}{\mathbf}
\newcommand{\tr}{\text{Tr}}
\newcommand{\wh}{\widehat}

\begin{document}

\title{The Zero-error Entanglement Cost is Highly Non-Additive}

\author{Qiuling Yue}
\email{qiuling.yue@siu.edu}
\affiliation{State Key Laboratory of Networking and Switching Technology, Beijing University of Posts and Telecommunications, Beijing 100876, China}
\affiliation{Department of Physics and Astronomy, Southern Illinois University,
Carbondale, Illinois 62901, USA}
\author{Eric Chitambar}
 \email{echitamb@illinois.edu}
\affiliation{Department of Electrical and Computer Engineering, Coordinated Science Laboratory,\\ University of Illinois at Urbana-Champaign, Urbana, IL 61801}
%

\date{\today}

\begin{abstract}
The Schmidt number is an entanglement measure whose logarithm quantifies the zero-error entanglement cost of generating a given quantum state using local operations and classical communication (LOCC).  
In this paper we show that the Schmidt number is highly non-multiplicative in the sense that for any integer $n$, there exists states whose Schmidt number remains constant when taking $n$ copies of the given state.  These states also provide a rare instance in which the regularized zero-error entanglement cost can be computed exactly.  We then explore the question of increasing the Schmidt number by quantum operations.  We describe a class of bipartite quantum operations that preserve the Schmidt number for pure state transformations, and yet they can increase the Schmidt number by an arbitrarily large amount when generating mixed states.  Our results are obtained by making connections to the resource theory of quantum coherence and generalizing the class of dephasing-covariant incoherent operations (DIO) to the bipartite setting.

\end{abstract}


\maketitle


\section{\label{sec:level1} Introduction}

Quantum entanglement and quantum coherence describe two of the most prominent features of quantum systems, and they represent strong departures from the classical description of matter.  Within quantum information science, entanglement has been characterized as an operational resource that can be employed in a ``distant-lab'' setting to perform some task, such as teleportation \cite{Plenio-2007a}.  This notion of resource is made precise in the framework of a quantum resource theory \cite{Horodecki-2013a, Brandao-2015a, Chitambar-2018a}.  Broadly speaking, a quantum resource theory studies quantum information processing under a restricted class of operations, and certain quantum states, called resource states, cannot be generated by these operations.

In the resource theory of entanglement, local operations and classical communication (LOCC) are the allowed operations, and the states that cannot be generated by LOCC, i.e. entangled states, are the resource states.  Other resource theories have been studied in the literature that still hold entanglement (or at least NPT entanglement) as a resource, but they nevertheless allow for operations distinct from LOCC, such as separable operations \cite{Rains-1997a, Vedral-1997a}, non-entangling operations \cite{Harrow-2003a, Brandao-2011a}, and PPT operations \cite{Rains-1999a}.  While these latter resource theories lack the direct physical motivation characteristic of LOCC, they are nevertheless valuable to understanding the structure of quantum entanglement and deriving analytic bounds on what is possible by LOCC.  Another such operational ``cousin'' (or perhaps ``second cousin'') to LOCC will be studied in this paper with the aim of shedding new light on the nature of the zero-error entanglement cost for generating a given quantum state.

Motivated by results in the resource-theoretic study of entanglement, much interest has recently been placed in formulating a resource theory of quantum coherence \cite{Aberg-2006a, Baumgratz-2014a, Yadin-2016a, Winter-2016b}.  The essential aim behind this research direction is to capture coherent superposition as a fungible non-classical resource possessed by certain quantum states that can be quantified and manipulated.  Many similarities exist between the resource theories of coherence and entanglement.  One example of this commonality discussed in this paper is the Schmidt/coherence number and its operational interpretation as the zero-error resource cost.  We begin in Sections \ref{Sect:Schmidt_Number}  and \ref{Sect:Coherence_Number} by reviewing these objects in entanglement and coherence theories, respectively.  We then investigate different operational properties of the coherence number in Section \ref{Sect:DIO} and formally map these results to the Schmidt number of entanglement theory in Section \ref{Sect:MCDC}.

\section{The Schmidt Number and Entanglement Cost of a Bipartite Quantum State}

\label{Sect:Schmidt_Number}

For a bipartite state $\rho^{AB}$, the task of state preparation, or entanglement dilution, involves transforming an $m$-dimensional maximally entangled state $\ket{\Phi_m}=\sqrt{1/m}\sum_{i=0}^{m-1}\ket{i}\ket{i}$ into $\rho$ using LOCC.  In practice, one would like to minimize the amount of entanglement consumed in this task, and this optimal amount quantifies the \textit{zero-error entanglement cost} of the state $\rho$, defined as
\begin{equation}
\label{Eq:zero-error-entanglement-cost}
E^{(0)}_c(\rho)=\min \{\log m:\inf_{\Lambda\in\mc{L}}||\Lambda(\Phi_m)-\rho||_1=0\},
\end{equation}
where $\Phi_m=\op{\Phi_m}{\Phi_m}$, $||A||_1=\tr\sqrt{A^\dagger A}$, and the infimum is taken over the set $\mc{L}$ of all LOCC maps acting on $\Phi_m$.  
In the definition of $E^{(0)}_c(\rho)$, the preparation of the target state $\rho$ is required to be perfect, a condition typically too stringent for experimental implementations.  Instead, it is common to consider a many-copy preparation and/or allow for $\epsilon$-error \cite{Buscemi-2011a}.  Intuitively, one might expect that preparing two copies of a bipartite state $\rho$ in, say, the zero-error case, requires twice as much entanglement as when preparing just one copy.  Surprisingly this intuition turns out to be incorrect, i.e $E_c^{(0)}(\rho^{\otimes 2})\not=2E_c^{(0)}(\rho)$, a phenomenon known as non-additivity.

To understand the origin of non-additivity in zero-error state preparation, it is helpful to use a more mathematically analyzable characterization of $E_c$ than its operational definition in Eq. \eqref{Eq:zero-error-entanglement-cost}.  From basic results in the study of LOCC entanglement transformations, it follows that the zero-error entanglement cost of a state $\rho$ is given precisely by the logarithm of its Schmidt number \cite{Nielsen-1999a, Hayashi-2006a}.  Recall that the Schmidt number of a pure state $\ket{\varphi}$, denoted as $r_S(\varphi)$ and also called the Schmidt rank, is the smallest $m$ such that $\ket{\varphi}$ can be expressed as a sum of $m$ product states, i.e. $\ket{\varphi}=\sum_{i=1}^m\sqrt{\lambda_i}\ket{\alpha_i}\ket{\beta_i}$.  In Ref. \cite{Terhal-2000a},  Terhal and Horodecki extended this definition to mixed states as
\begin{equation}
r_S(\rho)=\inf_{\{p_i,\ket{\psi_i}\}}\max_ir_S(\psi_i),
\end{equation}
where the minimization is taken over all pure state ensembles such that $\rho=\sum_ip_i\op{\psi_i}{\psi_i}$.  

With the equivalence $E_c^{(0)}(\rho)=\log_2 r_S(\rho)$, non-additivity of $E_c^{(0}$ can then be seen as arising from the freedom that exists in representing $\rho^{\otimes 2}$ as a mixture of pure states, a freedom which allows for entanglement across the Hilbert space of each copy.  In their original paper, Terhal and Horodecki already discovered that the Schmidt number is non-additive.  Specifically, they showed that $E^{(0)}_c(\rho^{\otimes 2})=E^{(0)}_c(\rho)$ for certain entanglement Werner states.  A natural subsequent question to consider is how non-additive the Schmidt number can be.  As our first main result, we show that the non-additivity can be arbitrarily large.  More precisely, for any $n\in\mbb{N}$, we describe a two-qubit entangled state $\rho$ such that
\begin{equation}
r_S(\rho^{\otimes n})=r_S(\rho).
\end{equation}
Physically this means that just as much entanglement is needed to prepare $n$ copies of the state by LOCC than is needed to prepare just a single copy.

Given the non-additivity of $E_c^{(0)}$, a perhaps more appropriate way to measure the intrinsic resource cost of exactly preparing a given state is through regularization.  The regularized zero-error entanglement cost is defined as
\begin{align}
E_c^{(0,\infty)}(\rho)&=\lim_{n\to\infty}\frac{1}{n}E_c^{(0)}(\rho^{\otimes n})\notag\\
&=\lim_{n\to\infty}\frac{1}{n}\log_2\left[r_S(\rho^{\otimes n})\right].
\end{align}
This can be equivalently expressed as the optimal asymptotic rate of entangled bits (ebits) $\ket{\Phi_2}$ consumed per copy of $\rho$ perfectly generated.  That is,
\begin{equation}
E^{(0,\infty)}_c(\rho)=\inf\{r :\lim_{n\to\infty}\{\inf_{\Lambda\in\mc{L}}||\Lambda(\Phi^{\otimes \lceil rn\rceil }_2)-\rho^{\otimes n}||_1=0\}\}.
\end{equation}
A more familiar quantity is called the entanglement cost of the state $\rho$ \cite{Bennett-1996a, Hayden-2001a}, and it is defined as
\begin{equation}
E_c(\rho)=\inf\{r :\lim_{n\to\infty}\{\inf_{\Lambda\in\mc{L}}||\Lambda(\Phi^{\otimes \lceil rn\rceil}_2)-\rho^{\otimes n}||_1\}=0\}.
\end{equation}
The key difference between $E_c^{(0,\infty)}$ and $E_c$ is that $E_c$ allows for vanishingly small error whereas the error tolerance in $E_c^{(0,\infty)}$ is strictly zero.  One easily sees that
\begin{equation}
E_c\leq E_c^{(0,\infty)}\leq E_c^{(0)}.
\end{equation}
In general all three of these quantities are difficult to compute.  However, below we determine their values for a family of two-qubit maximally correlated mixed-states.  To our knowledge, this represents the first time that $E_c^{(0,\infty)}$ has been analytically computed for a genuinely mixed state.

To obtain the above results, we study the analogous questions in the resource theory of quantum coherence.  We then formally map what we learn in coherence theory to the case of entanglement.

\section{The Coherence Number and Coherence Cost of a Quantum State}

\label{Sect:Coherence_Number}

Recent approaches to a resource theory of quantum coherence begin by specifying some orthonormal basis $\{\ket{i}\}_{i=0}^{d-1}$ for the state space $\mc{H}$ of some $d$-dimensional quantum system, referred to as the incoherent basis \cite{Baumgratz-2014a}.  Coherence is then defined and quantified with respect to this basis.  The free states are those that are diagonal in the incoherent basis, and any state possessing non-zero off-diagonal terms is deemed a resource state.  For convenience, in what follows we will denote the set of all density matrices acting on Hilbert space $\mc{H}$ by $\mc{D}(\mc{H})$ (or simply $\mc{D}$), the set of all diagonal states by $\mc{I}(\mc{H})$ (or simply $\mc{I}$), and the completely positive trace-preserving (CPTP) map which fully dephases every input state by $\Delta$, i.e. $\Delta(\rho)=\sum_i\op{i}{i}\rho\op{i}{i}$.  Note that $\rho\in\mc{I}$ if and only if $\Delta(\rho)=\rho$.

Different types of allowed quantum operations have been proposed, but all of them forbid the generation of non-diagonal states from diagonal ones.  The most well-studied operational class is called \textit{incoherent operations} (IO), and it consists of CPTP maps $\Lambda$ admitting a Kraus operator representation $\{K_\lambda\}_\lambda$ in which $K_\lambda\rho K_\lambda^\dagger= \Delta(K_\lambda\rho K_\lambda^\dagger)$ for all $\lambda$ and all $\rho\in\mc{I}$ \cite{Baumgratz-2014a}.  A subset of IO is the class of \textit{strictly incoherent operations} (SIO), and an SIO map $\Lambda$ satisfies the further restriction that $K_\lambda \Delta(\rho) K_\lambda^\dagger=\Delta(K_\lambda\rho K_\lambda^\dagger)$ for all $\lambda$ and all $\rho\in\mc{D}$ \cite{Winter-2016b, Yadin-2016a}.  Note that every SIO map $\Lambda$ commutes with $\Delta$, but the converse is not true.  The collection of all \textit{dephasing covariant incoherent operations} (DIO) are those satisfying $\Lambda\circ\Delta=\Delta\circ\Lambda$, and they represent a third class of incoherent operations \cite{Chitambar-2016b, Marvian-2016a}.

Analogous to entanglement, the task of resource cost can then be studied using any of these operations.  For operational class $\mc{O}\in\{\text{SIO,IO,DIO}\}$, the zero-error cost, regularized zero-error cost, and coherence cost are respectively defined as
\begin{align}
C^{(0)}_{c,\mc{O}}(\rho)&=\min \{\log m:\inf_{\Lambda\in\mc{O}}||\Lambda(\phi_m)-\rho||_1=0\};\notag\\
C^{(0,\infty)}_{c,\mc{O}}(\rho)&=\inf\{r :\lim_{n\to\infty}\{\inf_{\Lambda\in\mc{O}}||\Lambda(\Phi^{\otimes \lceil rn\rceil}_2)-\rho^{\otimes n}||_1=0\}\}\notag\\
C_{c,\mc{O}}(\rho)&=\inf\{r :\lim_{n\to\infty}\{\inf_{\Lambda\in\mc{O}}||\Lambda(\Phi^{\otimes \lceil rn\rceil}_2)-\rho^{\otimes n}||_1\}=0\}.
\end{align}
Here $\ket{\phi_m}=\frac{1}{\sqrt{m}}=\sum_{i=0}^{m-1}\ket{i}$ is the $m$-dimensional maximally coherent state \cite{Baumgratz-2014a}.
The asymptotic coherence cost was first proposed by Winter and Yang in Ref. \cite{Winter-2016b} for SIO and IO, and it was later studied for DIO in \cite{Chitambar-2018c}.  The zero-error coherence cost was studied by Zhao \textit{et al.} in \cite{Zhao-2018a} for all three operational classes.  The zero-error coherence costs under SIO and IO are equivalent, and it can be expressed as
\begin{equation}
C_{c,\text{SIO/IO}}^{(0)}(\rho)=\inf_{\{p_i,\ket{\psi_i}\}}\max_i\log_2[r_C(\psi_i)],
\end{equation}
where the minimization is taken over all pure state ensembles for $\rho$, and $r_c(\psi)$ is the number of nonzero terms in the expansion $\ket{\psi}=\sum_{i=1}^dc_i\ket{i}$.  The quantity $r_c(\psi)$ is called the \textit{coherence number}, and it plays the analog of the Schmidt number in coherence theory.

For pure states, the zero-error IO coherence cost is additive since $r_c(\psi^{\otimes n})=r_c(\psi)^n$.  For mixed states, this equivalence breaks down in dramatic fashion, as the following example shows.  Consider the family of qubit states
\begin{align}
\omega_\alpha=\frac{1-\alpha}{2}\mbb{I}+\alpha\phi_2.
\end{align}
Each state in this family represents a mixing of the maximally coherent state with the totally mixed state.
\begin{theorem}
\label{thm:1}
Let $n\in\mbb{N}$ be arbitrary and take $0<\alpha\leq 2^{1/n}-1$.  Then
\begin{equation}
r_c(\omega_\alpha^{\otimes n})=r_c(\omega_\alpha)=2.
\end{equation}
\end{theorem}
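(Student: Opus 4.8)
The plan is to prove $r_c(\omega_\alpha^{\otimes n}) \ge 2$ and $r_c(\omega_\alpha^{\otimes n}) \le 2$ separately. The lower bound (and also $r_c(\omega_\alpha) \ge 2$) is the observation that coherence number $1$ characterizes exactly the incoherent states — an ensemble attaining $\max_i r_c(\psi_i) = 1$ consists of basis states, so its average is diagonal — whereas $\omega_\alpha^{\otimes n}$ (and $\omega_\alpha$ itself) has nonzero off-diagonal entries for $\alpha > 0$. Since $\omega_\alpha$ is a qubit state, $r_c(\omega_\alpha) \le 2$ trivially, so $r_c(\omega_\alpha) = 2$.

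The substance is the upper bound $r_c(\omega_\alpha^{\otimes n}) \le 2$, for which I would exhibit an explicit pure-state ensemble with every member of coherence number at most $2$. Working in the computational basis $\{\ket{x}:x\in\{0,1\}^n\}$ of $(\mbb{C}^2)^{\otimes n}$ and writing $\omega_\alpha = \tfrac12\op{0}{0}+\tfrac12\op{1}{1}+\tfrac\alpha2(\op{0}{1}+\op{1}{0})$, one obtains $\omega_\alpha^{\otimes n} = 2^{-n}\sum_{x,y}\alpha^{d_H(x,y)}\op{x}{y}$, with $d_H$ the Hamming distance. For each unordered pair $\{x,y\}$ with $x\ne y$, set $\ket{\phi_{xy}}=\tfrac1{\sqrt2}(\ket{x}+\ket{y})$ and $q_{xy}=2^{1-n}\alpha^{d_H(x,y)}$, and let $r_x = 2^{-n}\bigl(2-(1+\alpha)^n\bigr)$. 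The identity to check — a routine matching of matrix entries — is
\begin{equation}
\omega_\alpha^{\otimes n} = \sum_{\{x,y\}:\,x\ne y} q_{xy}\,\op{\phi_{xy}}{\phi_{xy}} \;+\; \sum_x r_x\,\op{x}{x},
\end{equation}
which holds because on the right the coefficient of $\op{x}{y}$ for $x\ne y$ equals $q_{xy}/2 = 2^{-n}\alpha^{d_H(x,y)}$, and the coefficient of $\op{x}{x}$ equals $\sum_{y\ne x} q_{xy}/2 + r_x = 2^{-n}\sum_{d=1}^n\binom{n}{d}\alpha^d + r_x = 2^{-n}$. Every $\ket{\phi_{xy}}$ has coherence number $2$ and every $\ket{x}$ has coherence number $1$, so once all coefficients are shown nonnegative this is a valid ensemble for $\omega_\alpha^{\otimes n}$ and $r_c(\omega_\alpha^{\otimes n})\le 2$.

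Nonnegativity is where the hypothesis enters and is the only real constraint: $q_{xy}>0$ always, while $r_x\ge 0$ exactly when $(1+\alpha)^n\le 2$, i.e. $\alpha\le 2^{1/n}-1$. I do not expect a genuine obstacle here; the one point I would emphasize is the slightly counterintuitive fact that a pure state supported on just two computational basis vectors has coherence number $2$ no matter how many bits the two strings differ in, so a single rank-one term $\op{\phi_{xy}}{\phi_{xy}}$ already supplies an off-diagonal element at arbitrary Hamming distance — this is what makes the cheap decomposition possible, and $(1+\alpha)^n = 2$ is precisely the threshold at which the residual diagonal weight $r_x$ runs out.
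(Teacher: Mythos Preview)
Your proof is correct and essentially identical to the paper's: the paper writes the same decomposition $\omega_\alpha^{\otimes n}=\frac{1}{2^n}\sum_{\mathbf{i}\neq\mathbf{j}}\alpha^{n_{\mathbf{ij}}}\op{A_{\mathbf{ij}}}{A_{\mathbf{ij}}}+\frac{2-(1+\alpha)^n}{2^n}\mathbb{I}$ with $\ket{A_{\mathbf{ij}}}=\tfrac{1}{\sqrt2}(\ket{\mathbf{i}}+\ket{\mathbf{j}})$ (summed over ordered pairs, hence the factor-of-two difference from your $q_{xy}$), verifies the diagonal via the same binomial identity $\sum_{k=1}^n\binom{n}{k}\alpha^k=(1+\alpha)^n-1$, and invokes the same nonnegativity threshold $(1+\alpha)^n\le 2$. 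The lower bound argument is also the same.
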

\begin{proof}
First observe that $r_c(\omega_{\alpha}^{\otimes n})\geq 2$ for any $\alpha>0$ since $\omega_\alpha$ is an incoherent state if and only if $\alpha=0$.  So it remains to show that this lower bound is tight.  In the following we let $\mbf{i}=(i_0,i_1,\cdots,i_{n-1})$ denote an element in $\{0,1\}^{\times n}$, and similarly $\ket{\mbf{i}}=\ket{i_0}\ket{i_1}\cdots\ket{i_{n-1}}$.  Write $\omega_\alpha^{\otimes n}=\sum_{\mbf{i},\mbf{j}}r_{\mbf{i}\mbf{j}}\op{\mbf{i}}{\mbf{j}}$, where $\op{\mbf{i}}{\mbf{j}}=\op{i_0}{j_0}\otimes\op{i_1}{j_1}\otimes\cdots\otimes\op{i_{n-1}}{j_{n-1}}$.  Since $[[\omega_\alpha]]_{i,j}=1/2$ if $i\oplus j=0$ and $[[\omega_\alpha]]_{i,j}=\alpha/2$ if $i\oplus j=1$, we can express $\omega_\alpha^{\otimes n}$ as
\begin{equation}
\omega_\alpha^{\otimes n}=\frac{1}{2^n}\sum_{\mbf{i},\mbf{j}\in\{0,1\}^{\times n}}\alpha^{n_{\mbf{i}\mbf{j}}}\op{\mbf{i}}{\mbf{j}},
\end{equation}
where $n_{\mbf{i}\mbf{j}}=\sum_{k=0}^{n-1}(i_k\oplus j_k)$.  Our goal now is to further express $\omega_\alpha^{\otimes n}$ as a convex combination of pure states, each having coherence number no larger than two.  We claim that
\[\omega_\alpha^{\otimes n}=\frac{1}{2^n}\sum_{\mbf{i}\not=\mbf{j}}\alpha^{n_{\mbf{i}\mbf{j}}}\op{A_{\mbf{i}\mbf{j}}}{A_{\mbf{i}\mbf{j}}}+\frac{2-(1+\alpha)^n}{2^n}\mbb{I},\]
where
\[\op{A_{\mbf{i}\mbf{j}}}{A_{\mbf{i}\mbf{j}}}=\frac{1}{2}\left(\op{\mbf{i}}{\mbf{i}}+\op{\mbf{i}}{\mbf{j}}+\op{\mbf{j}}{\mbf{i}}+\op{\mbf{j}}{\mbf{j}}\right).\]
Clearly this decomposition correctly recovers all the off-diagonal elements of $\omega_\alpha^{\otimes n}$.  To check correctness of the diagonal terms, first notice that by symmetry
\[\sum_{\mbf{j}\in\{0,1\}^{\times n}}\alpha^{n_{\mbf{i}\mbf{j}}}=\sum_{\mbf{j}\in\{0,1\}^{\times n}}\alpha^{n_{\mbf{i}'\mbf{j}}}\]
for any sequences $\mbf{i}$ and $\mbf{i}'$, which implies that every diagonal term is the same in the part $\frac{1}{2^n}\sum_{\mbf{i}\not=\mbf{j}}\alpha^{n_{\mbf{i}\mbf{j}}}\op{A_{\mbf{i}\mbf{j}}}{A_{\mbf{i}\mbf{j}}}$.  Hence it suffices to consider the coefficient of $\op{0}{0}$, which is given by $\frac{1}{2^n}\sum_{\mbf{j}\not=0}\alpha^{n_{0\mbf{j}}}$.  The number $n_{0\mbf{j}}$ is simply the number of ones in $\mbf{j}$, and so
\begin{align}
\sum_{\mbf{j}\not=0}\alpha^{n_{0\mbf{j}}}=\sum_{k=1}^{n}\alpha^k\binom{n}{k}=(1+\alpha)^n -1\leq 1,
\end{align}
by the assumption $\alpha\leq 2^{1/n}-1$.  Adding the identity by an appropriate amount thus recovers the correct diagonal terms of value $\frac{1}{2^n}$.
\end{proof}

Note that Theorem \ref{thm:1} only guarantees strong non-additivity for $\alpha\leq 2^{1/n}-1$.  In contrast, when $\alpha=1$, $\omega_\alpha$ is pure and $r_C(\omega_\alpha^{\otimes n})=2^n$.  Hence, the coherence rank of $\omega^{\otimes n}_\alpha$ varies between $2$ and $2^n$ for $2^{1/n}-1\leq \alpha\leq 1$.  For $\alpha>\sqrt{2}-1$, we leave it as an open problem to find the coherence rank of $\omega^{\otimes n}_\alpha$.  

The next task is to consider the regularized zero-error coherence cost of $\omega_\alpha$.  We first observe a lower bound given in terms of the $\ell_1$ coherence norm \cite{Baumgratz-2014a}, which is defined as
\begin{equation}
||\rho||_{\ell_1}=\sum_{i\not=j}|\bra{i}\rho\ket{j}|.
\end{equation}
In other words, the $\ell_1$ norm sums the magnitude of all off-diagonal terms of a given density matrix.  Up to a permutation of incoherent basis states, we can express any pure state as
\begin{equation}
\ket{\psi}=\sum_{i=1}^{r_c(\psi)}c_i\ket{i}.
\end{equation}
It is well-known that the transformation $\ket{\Psi_{r_c(\psi)}}\to\ket{\psi}$ is always possible using IO \cite{Du-2015a, Winter-2016b}.  Then since the $\ell_1$ norm is a monotone under IO, and since $||\phi_{m}||_{\ell_1}=m-1$, it follows that
\begin{equation}
\label{Eq:l1-bound}
r_c(\psi)\geq ||\psi||_{\ell_1}+1,
\end{equation}
with equality holding whenever $\ket{\psi}$ is maximally coherent. Consider now a mixed state $\rho$ such that $r_c(\rho)=m$.  There must exist a pure state ensemble $\{p_k,\ket{\varphi_k}\}$ for $\rho$ such that $r_c(\varphi_k)\leq m$.  Using the previous inequality we therefore obtain
\begin{align}
\label{Eq:l1-bound-general}
r_c(\rho)&\geq \sum_{k}p_kr_c(\varphi_k)\geq \sum_{k}p_k||\varphi_k||_{\ell_1}+1\geq ||\rho||_{\ell_1}+1,
\end{align}
where the last line follows from convexity of the $\ell_1$ norm.  Equation \eqref{Eq:l1-bound-general} is a general bound for all states, and when applied to $\omega^{\otimes n}_\alpha$ we obtain
\begin{align}
\label{Eq:Coherenc_Rank_lower_bound}
r_c(\omega^{\otimes n}_\alpha)&\geq ||\omega^{\otimes n}_\alpha||_{\ell_1}+1\notag\\
&=\frac{1}{2^n}\sum_{\mbf{i}\not=\mbf{j}}\alpha^{n_{\mbf{i}\mbf{j}}}+1=(1+\alpha)^n.
\end{align}
Combining this bound with Theorem \ref{thm:1} yields the following.
\begin{theorem}
\label{thm:2}
For $0<\alpha\leq \sqrt{2}-1$,
\begin{equation}
\left\lfloor\frac{1}{\log_2(\alpha+1)}\right\rfloor^{-1} \geq C_{c,IO}^{(0,\infty)}(\omega_\alpha)\geq \log_2(\alpha+1),
\end{equation}
where $\lfloor x\rfloor$ denotes the largest integer no greater than $x$.
\end{theorem}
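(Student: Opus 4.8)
The plan is to establish the two inequalities separately: the lower bound falls out directly from Eq.~\eqref{Eq:Coherenc_Rank_lower_bound}, while the upper bound comes from a blocking argument built on Theorem~\ref{thm:1}. Throughout I will use the identity $C^{(0)}_{c,IO}(\rho)=\log_2 r_c(\rho)$ established by Zhao \textit{et al.}~\cite{Zhao-2018a} (which follows from the displayed formula for $C^{(0)}_{c,\text{SIO/IO}}$ together with the definition of the mixed-state coherence number), along with the facts that $\phi_2^{\otimes k}=\phi_{2^k}$ and that $\phi_{2^k}$ converts exactly to $\rho$ under IO if and only if $2^k\ge r_c(\rho)$. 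These reduce the rate-based definition of $C^{(0,\infty)}_{c,IO}(\omega_\alpha)$ to the statement that $r$ is achievable precisely when $\lceil rn\rceil\ge\log_2 r_c(\omega_\alpha^{\otimes n})$ for all large $n$, so the whole theorem is a statement about the growth rate of $r_c(\omega_\alpha^{\otimes n})$.

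For the lower bound, I would observe that Eq.~\eqref{Eq:Coherenc_Rank_lower_bound} gives $r_c(\omega_\alpha^{\otimes n})\ge(1+\alpha)^n$ for every $n$. Hence, if a rate $r$ is achievable, then $2^{\lceil rn\rceil}\ge r_c(\omega_\alpha^{\otimes n})\ge(1+\alpha)^n$ for all sufficiently large $n$, so $rn+1\ge n\log_2(1+\alpha)$, and letting $n\to\infty$ yields $r\ge\log_2(1+\alpha)$. Taking the infimum over achievable rates gives $C^{(0,\infty)}_{c,IO}(\omega_\alpha)\ge\log_2(\alpha+1)$.

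For the upper bound, set $n^{*}=\lfloor 1/\log_2(\alpha+1)\rfloor$. Since $0<\alpha\le\sqrt2-1$ we have $0<\log_2(\alpha+1)\le 1/2$, so $n^{*}\ge 2$ is a well-defined positive integer, and the inequality $n^{*}\le 1/\log_2(\alpha+1)$ rearranges to $(\alpha+1)^{n^{*}}\le 2$, i.e. $\alpha\le 2^{1/n^{*}}-1$. Theorem~\ref{thm:1} therefore applies with $n=n^{*}$ and gives $r_c(\omega_\alpha^{\otimes n^{*}})=2$. The key structural input I would then invoke is that the coherence number is submultiplicative under tensor products and nondecreasing under discarding subsystems, so that for any $m$, writing $q=\lceil m/n^{*}\rceil$ (hence $m\le q\,n^{*}$), $r_c(\omega_\alpha^{\otimes m})\le r_c(\omega_\alpha^{\otimes n^{*}q})\le r_c(\omega_\alpha^{\otimes n^{*}})^{q}=2^{q}$. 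Thus $\log_2 r_c(\omega_\alpha^{\otimes m})\le\lceil m/n^{*}\rceil$, so the rate $r=1/n^{*}$ satisfies $2^{\lceil rm\rceil}=2^{\lceil m/n^{*}\rceil}\ge r_c(\omega_\alpha^{\otimes m})$ for every $m$; consequently $\phi_2^{\otimes\lceil m/n^{*}\rceil}$ converts exactly to $\omega_\alpha^{\otimes m}$ by IO, the rate $1/n^{*}$ is achievable, and $C^{(0,\infty)}_{c,IO}(\omega_\alpha)\le 1/n^{*}=\lfloor 1/\log_2(\alpha+1)\rfloor^{-1}$.

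The only part requiring genuine care is the submultiplicativity and monotonicity of $r_c$ used above. Submultiplicativity, $r_c(\rho\otimes\sigma)\le r_c(\rho)\,r_c(\sigma)$, follows by tensoring optimal pure-state ensembles, using that the coherence rank of a product pure state is the product of the coherence ranks; monotonicity under partial trace follows because every component appearing when one subsystem of a pure state is traced out has coherence rank at most that of the original pure state. I expect confirming these two elementary lemmas, and checking that the rate-based definition really does collapse to the stated floor arithmetic once $C^{(0)}_{c,IO}=\log_2 r_c$ is in hand, to be the main (though modest) obstacle; the remainder is bookkeeping with floors and ceilings.
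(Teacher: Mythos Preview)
Your proposal is correct and follows essentially the same route as the paper: the lower bound via Eq.~\eqref{Eq:Coherenc_Rank_lower_bound}, and the upper bound by choosing $n^{*}=\lfloor 1/\log_2(\alpha+1)\rfloor$, invoking Theorem~\ref{thm:1}, and applying submultiplicativity of $r_c$. Your treatment is slightly more careful than the paper's (you explicitly verify $\alpha\le 2^{1/n^{*}}-1$ and use monotonicity under partial trace to handle all $m$ rather than only multiples of $n^{*}$), but the core argument is the same.
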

\begin{proof}
The lower bound follows immediately from Eq. \eqref{Eq:Coherenc_Rank_lower_bound} and the fact that the coherence rank captures the zero-error coherence cost under IO.  For the upper bound, take $m=\left\lfloor\frac{1}{\log(\alpha+1)}\right\rfloor$.  Then Theorem \ref{thm:1} implies that $r_c(\omega_\alpha^{\otimes m})=2$. Hence
\begin{align}
\lim_{n\to\infty}\frac{1}{n}\log_2[r_c(\omega_{\alpha}^{\otimes n})]&=\lim_{n\to\infty}\frac{1}{nm} \log_2[r_c(\omega_\alpha^{\otimes nm})]\notag\\
&\leq \frac{1}{m}\log_2[r_c(\omega_\alpha^{\otimes m})]=\frac{1}{m},
\end{align}
where we use the inequality $r_c(\omega_\alpha^{\otimes nm})\leq r_c(\omega_\alpha^{\otimes m})^n$.
\end{proof}
Whenever $\alpha=2^{1/n}-1$ for some integer $n$, the upper and lower bounds coincide and
\begin{equation}
C_{c,IO}^{(0,\infty)}(\omega_\alpha)= \log_2(\alpha+1).
\end{equation}

\section{The Coherence Number Under Dephasing-Covariant Operations}

\label{Sect:DIO}

We next consider the question of increasing the Schmidt rank by quantum operations.  As described above, the zero-error coherence cost under IO and SIO is given precisely by the logarithm of the coherence rank.  Hence, the coherence rank must be a monotone under IO.  However, under DIO, the zero coherence cost is given by a different quantity, as shown in Ref. \cite{Zhao-2018a} and reviewed below.  It then remains a question of whether the coherence rank is still a monotone under DIO.  Here we find that DIO is indeed a monotone for pure state transformations (see also \cite{Chitambar-2016b}), whereas for mixed states it can be increased by an arbitrarily large amount.

For an arbitrary density matrix $\rho$, consider the transformation $\phi_d\to\rho$ via some CPTP map $\Lambda$.  To obtain necessary and sufficient conditions for such a map to be DIO, let $\Omega:=\id\otimes\Lambda\left(\sum_{i,j=1}^d\op{ii}{jj}\right)$ denote its Choi matrix \cite{Choi-1975a}.  Since $\phi_d$ is invariant under permutations of basis states (a valid DIO operation), we can symmetrize over all permutations applied to the first system in $\Omega$, and without loss of generality we have that
\begin{equation}
\Omega=\phi_d\otimes A+(\mbb{I}_d-\phi_d)\otimes B,
\end{equation}
where $A,B\geq 0$ and $\tr[A]=\tr[B]=1$.  For any state $\sigma$, we compute
\begin{align}
\Lambda(\Delta(\sigma^T))&=\frac{1}{d}A+(1-\frac{1}{d})B\notag\\
\Delta(\Lambda(\sigma))&=\bra{\phi_d}\sigma\ket{\phi_d}\Delta(A)+(1-\bra{\phi_d}\sigma\ket{\phi_d})\Delta(B).\notag
\end{align}
For $\Lambda$ to be DIO, these lines must be equivalent for all $\sigma$, meaning that
\begin{equation}
\label{Eq:DIO-constraint}
\frac{1}{d}A+(1-\frac{1}{d})B=\Delta(B)=\Delta(A).
\end{equation}
In summary, we have $\phi_d\to\rho=:A$ by DIO if and only if there exists a diagonal matrix $D$ with $\tr[D]=1$ and off-diagonal matrix $Z$ such that
\begin{align}
A&=D+Z\geq 0\\
B&=D-\frac{1}{d-1}Z\geq 0.
\end{align}
Solving for $Z$ in terms of $A$, we reach the following.
\begin{proposition}[\cite{Zhao-2018a}]
\label{Prop:DIO-MaxCoherent}
The transformation $\phi_d\to \rho$ is possible by DIO if and only if
\begin{equation}
\label{Eq:Delta-robustness}
d\Delta(\rho)-\rho\geq 0.
\end{equation}
\end{proposition}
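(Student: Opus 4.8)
The plan is to read the criterion off directly from the parametrization derived just above: $\phi_d\to\rho$ is possible by DIO exactly when $\rho$ can be written as $\rho=D+Z$ with $D$ diagonal, $\tr[D]=1$, and $Z$ off-diagonal, such that $\rho=D+Z\geq 0$ and $B:=D-\frac{1}{d-1}Z\geq 0$. The observation that makes everything immediate is that the splitting of a Hermitian matrix into its diagonal and off-diagonal parts is \emph{unique}; there is therefore no freedom left in $D$ and $Z$, and one is forced to take $D=\Delta(\rho)$ and $Z=\rho-\Delta(\rho)$. This is exactly what ``solving for $Z$ in terms of $A$'' amounts to.

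With this substitution the two easy conditions are automatic: $\rho\geq 0$ holds since $\rho$ is a density matrix, and $\tr[D]=\tr[\Delta(\rho)]=\tr[\rho]=1$. Hence the transformation is possible if and only if $B\geq 0$, and I would simply compute
\begin{equation}
B=\Delta(\rho)-\frac{1}{d-1}\bigl(\rho-\Delta(\rho)\bigr)=\frac{1}{d-1}\bigl(d\,\Delta(\rho)-\rho\bigr).
\end{equation}
Since $d-1>0$, the condition $B\geq 0$ is equivalent to $d\,\Delta(\rho)-\rho\geq 0$, which is Eq.~\eqref{Eq:Delta-robustness}. This gives the ``only if'' direction; the ``if'' direction follows by running the same steps backwards, since when the inequality holds the matrices $A=\rho$ and $B=\frac{1}{d-1}(d\,\Delta(\rho)-\rho)$ are positive semidefinite with unit trace and satisfy Eq.~\eqref{Eq:DIO-constraint}, so the channel whose Choi matrix is $\phi_d\otimes A+(\mbb{I}_d-\phi_d)\otimes B$ is a DIO map realizing $\phi_d\to\rho$.

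There is no substantive obstacle in this last step; all the real work went into setting up the Choi-matrix parametrization. The only two points that warrant a second look are that the symmetrization of the Choi matrix over permutations of the first system is genuinely without loss of generality --- which holds because coordinate permutations are DIO and fix $\phi_d$ --- and that the uniqueness of the diagonal/off-diagonal decomposition collapses the search over pairs $(D,Z)$ to a single candidate, so that the two positivity conditions are necessary as well as sufficient. Both were effectively handled above, leaving only the one-line computation displayed.
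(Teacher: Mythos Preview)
Your argument is correct and is exactly the route the paper takes: it sets up the Choi-matrix parametrization, arrives at the pair of conditions $A=D+Z\geq 0$ and $B=D-\tfrac{1}{d-1}Z\geq 0$, and then (in the one-line remark ``Solving for $Z$ in terms of $A$'') performs precisely your substitution $D=\Delta(\rho)$, $Z=\rho-\Delta(\rho)$ to reduce $B\geq 0$ to $d\,\Delta(\rho)-\rho\geq 0$. Your explicit spelling out of the uniqueness of the diagonal/off-diagonal split and of the converse direction is a welcome elaboration, but the underlying approach is identical.
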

In Ref. \cite{Chitambar-2016a}, the minimum $d$ that satisfies Eq. \eqref{Eq:Delta-robustness} was identified as the $\Delta$-robustness of coherence of $\rho$.  This quantity is easily seen to be monotonically decreasing under DIO, and for pure states, it coincides with the coherence rank \cite{Chitambar-2016b}.
For completeness, we here give a quick proof of this fact.
\begin{corollary}[\cite{Chitambar-2016b}]
The coherence rank is monotonically decreasing under DIO for pure state transformations; i.e. for transformations of the form $\ket{\psi}\to\ket{\tau}$.
\end{corollary}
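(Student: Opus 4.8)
The plan is to derive the statement from two applications of Proposition \ref{Prop:DIO-MaxCoherent}, linked by the elementary fact that for a pure state the smallest dimension $d$ for which $d\Delta(\psi)-\psi\geq 0$ equals the coherence rank $r_c(\psi)$. So first I would prove that fact. Up to a permutation of basis states and a diagonal phase rotation (both DIO operations, and neither changes $r_c$), we may write $\ket{\psi}=\sum_{i=1}^{r}c_i\ket{i}$ with $c_i>0$ and $r=r_c(\psi)$. On the span of $\{\ket{1},\dots,\ket{r}\}$ the operator $d\Delta(\psi)-\psi$ equals $d\,\mathrm{diag}(c_i^2)-\op{\psi}{\psi}$, while on the orthogonal complement it is diagonal with nonnegative entries. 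For any $\ket{v}=\sum_i v_i\ket{i}$ supported on that span, Cauchy--Schwarz gives $|\sum_i v_ic_i|^2\leq r\sum_i v_i^2c_i^2$, hence $\bra{v}(r\Delta(\psi)-\psi)\ket{v}\geq 0$; thus $r\Delta(\psi)-\psi\geq 0$. Conversely, evaluating $d\Delta(\psi)-\psi$ on the test vector with components $v_i=1/c_i$ yields $\bra{v}(d\Delta(\psi)-\psi)\ket{v}=dr-r^2=r(d-r)$, which is negative when $d<r$. So $d\Delta(\psi)-\psi\geq 0$ if and only if $d\geq r_c(\psi)$, which in particular identifies the $\Delta$-robustness of a pure state with its coherence rank.

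Now I would assemble the argument. By the ``if'' direction just established, $r_c(\psi)\Delta(\psi)-\psi\geq 0$, so Proposition \ref{Prop:DIO-MaxCoherent} guarantees the transformation $\phi_{r_c(\psi)}\to\ket{\psi}$ by DIO. Suppose in addition that $\ket{\psi}\to\ket{\tau}$ is achievable by DIO. Since DIO is closed under composition — if $\Lambda_1$ and $\Lambda_2$ each commute with $\Delta$, then $\Lambda_2\circ\Lambda_1$ does as well — concatenating the two maps gives $\phi_{r_c(\psi)}\to\ket{\tau}$ by DIO. Applying Proposition \ref{Prop:DIO-MaxCoherent} again, now in the ``only if'' direction, forces $r_c(\psi)\Delta(\tau)-\tau\geq 0$; and by the ``only if'' half of the fact above (applied to $\ket{\tau}$) this is possible only when $r_c(\psi)\geq r_c(\tau)$. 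This is exactly the asserted monotonicity.

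I do not expect a real obstacle here. The only point requiring a little care is the reduction to real positive amplitudes so that the Cauchy--Schwarz bound and the test vector $v_i=1/c_i$ are literally valid; this is handled at the outset by absorbing the phases into a diagonal incoherent unitary, which is DIO and preserves the coherence rank. Everything else is a direct appeal to the already-proven Proposition \ref{Prop:DIO-MaxCoherent} (equivalently, Eq.~\eqref{Eq:Delta-robustness}) and to closure of DIO under composition.
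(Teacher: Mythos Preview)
Your proof is correct and follows the same overall strategy as the paper: factor the transformation through $\ket{\phi_{r_c(\psi)}}$ and apply Proposition~\ref{Prop:DIO-MaxCoherent} in both directions. The one place you diverge is in how you extract $r_c(\psi)\geq r_c(\tau)$ from the positivity condition $r_c(\psi)\Delta(\tau)-\tau\geq 0$. The paper tests this operator against $\ket{\tau}$ itself, obtaining $r_c(\psi)\geq 1/\sum_i|c_i|^4$, and then asserts $1/\sum_i|c_i|^4\geq r_c(\tau)$; but that second inequality actually points the wrong way (the inverse participation ratio never exceeds the number of nonzero terms, with equality only for a uniform superposition). Your test vector $v_i=1/c_i$ is the correct choice: it gives $r(d-r)\geq 0$ and hence the tight bound $d\geq r$. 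So your argument is not only essentially the paper's argument, it repairs the slip in the paper's displayed inequality, and your Cauchy--Schwarz step makes the identification of the $\Delta$-robustness with the coherence rank self-contained rather than merely cited.
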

\begin{proof}
If $\ket{\psi}\to\ket{\tau}$ is possible by DIO then so is $\ket{\phi_{r_c(\psi)}}\to\ket{\tau}$.  Proposition \ref{Prop:DIO-MaxCoherent} then says that $r_c(\psi)\cdot\Delta(\tau)\geq\tau$.  Writing $\ket{\tau}=\sum_{i=1}^{r_c(\tau)}c_i\ket{i}$, we thus have
\begin{equation}
r_c(\psi)\geq 1/\sum_{i=1}^{r_c(\tau)}|c_i|^4\geq r_c(\tau).
\end{equation}
\end{proof}
We next show by an explicit example that monotonicity under DIO fails dramatically when the target state in the transformation is mixed.  For each non-negative integer $d$, consider the family of states $\{\ket{\varphi_k}\}_{k=0}^{d-1}$ belonging to $\mbb{C}^{2d}$, given by
\begin{align}
\ket{\varphi_{k}}:=\frac{1}{\sqrt{d+1}}\left(\ket{k}+\sum_{j=0}^{d-1}e^{-ijk(2\pi)/d}\ket{d+j}\right).
\end{align}
Each $\ket{\varphi_k}$ can be interpreted as the state obtained by starting with the initial superposition
\[\frac{1}{\sqrt{d+1}}\ket{k}+\frac{\sqrt{d}}{\sqrt{d+1}}\ket{d+k}\]
and performing a $d$-dimensional Fourier transform on the basis vectors $\{\ket{d+k}\}_{k=0}^{d-1}$.  To make this more transparent, we write
\begin{equation}
\ket{\varphi_k}=\frac{1}{\sqrt{d+1}}\ket{k}+\frac{\sqrt{d}}{\sqrt{d+1}}\ket{\widetilde{d+k}},
\end{equation}
where
\[\ket{\widetilde{d+k}}:=\frac{1}{\sqrt{d}}\sum_{j=0}^{d-1}e^{-ijk(2\pi)/d}\ket{d+j}.\]
A crucial property of the $\ket{\varphi_k}$ for our purposes is the following.
\begin{lemma}
\label{Lem:Coherence-number}
Any linear combination $\ket{\kappa}=\sum_{k=0}^{d-1}a_k\ket{\varphi_k}$ has coherence rank at least $d+1$.
\end{lemma}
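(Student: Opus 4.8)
The plan is to reduce the statement to the discrete Fourier uncertainty principle. Collecting terms in the incoherent basis, any $\ket{\kappa}=\sum_{k=0}^{d-1}a_k\ket{\varphi_k}$ equals
\begin{equation}
\ket{\kappa}=\frac{1}{\sqrt{d+1}}\sum_{k=0}^{d-1}a_k\ket{k}+\frac{1}{\sqrt{d+1}}\sum_{j=0}^{d-1}\widehat{a}_j\ket{d+j},
\end{equation}
where $\widehat{a}_j:=\sum_{k=0}^{d-1}a_ke^{-ijk(2\pi)/d}$ is the discrete Fourier transform of $a=(a_0,\dots,a_{d-1})$. The first block of $d$ coordinates is a rescaling of $a$ and the second block a rescaling of $\widehat{a}$, so the coherence rank of $\ket{\kappa}$ is exactly $|\mathrm{supp}(a)|+|\mathrm{supp}(\widehat{a})|$. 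Hence the lemma is equivalent to the inequality $|\mathrm{supp}(a)|+|\mathrm{supp}(\widehat{a})|\ge d+1$ for every nonzero $a\in\mbb{C}^d$ --- equivalently, that the $d$-dimensional subspace $\mathrm{span}\{\ket{\varphi_k}\}$, regarded as a code in $\mbb{C}^{2d}$ with generator matrix $[\,\mbb{I}_d\mid F\,]$ ($F$ the DFT matrix), meets the Singleton bound, i.e.\ is MDS.

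I would then argue this directly. Let $T=\mathrm{supp}(a)$, $t=|T|$, and let $S$ be the set on which $\widehat{a}$ vanishes. The identities $\sum_{k\in T}a_ke^{-ijk(2\pi)/d}=0$ for $j\in S$ say that the nonzero vector $a|_T$ is annihilated by the $|S|\times t$ submatrix of $F$ with rows indexed by $S$ and columns by $T$. If $|S|\ge t$, then any $t$ of those rows form a square submatrix of $F$, which is nonsingular by Chebotarev's theorem that every minor of the DFT matrix of prime order is nonzero; this forces $a|_T=0$, a contradiction. Hence $|S|\le t-1$, so $|\mathrm{supp}(\widehat{a})|=d-|S|\ge d-t+1$, and adding $|\mathrm{supp}(a)|=t$ gives $d+1$.

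The crux --- the only step that is not elementary bookkeeping --- is the nonvanishing of all square submatrices of the prime-order DFT matrix (Chebotarev's theorem on roots of unity, equivalently Tao's sharp uncertainty principle for $\mbb{Z}/d$). I would also flag that this, and hence the lemma, genuinely requires $d$ prime: for composite $d$, taking $a$ to be the indicator of a proper nonzero subgroup $H\le\mbb{Z}/d$ makes $\widehat{a}$ supported on the annihilator $H^{\perp}$, so $|\mathrm{supp}(a)|+|\mathrm{supp}(\widehat{a})|$ can drop to $2\sqrt{d}$; for instance $\ket{\varphi_0}+\ket{\varphi_2}$ has coherence rank $4<d+1$ when $d=4$. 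This costs nothing for the application --- the construction only needs $d\to\infty$, and there are infinitely many primes --- so I would simply state and use the lemma for prime $d$.
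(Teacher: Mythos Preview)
Your reduction is the same as the paper's: expand $\ket{\kappa}$ in the incoherent basis so that the coherence rank equals $|\mathrm{supp}(a)|+|\mathrm{supp}(\widehat a)|$, and then bound this sum via properties of submatrices of the DFT matrix. The difference lies in how the key step is justified. The paper assumes ``without loss of generality'' that the support of $a$ is $\{0,\dots,t-1\}$, takes the corresponding first $t$ columns $U_{FT}^{(t)}$ of the DFT matrix, and asserts that because these columns are linearly independent, $\vec v=\sqrt d\,U_{FT}^{(t)}\vec a$ can have at most $t-1$ zeros. That deduction is invalid as written: column independence only gives $\vec v\neq 0$; bounding the number of zeros needs every $t\times t$ row-submatrix to be nonsingular. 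For \emph{consecutive} columns this does hold via the Vandermonde structure, but the ``without loss of generality'' reduction to consecutive support is itself unjustified --- permuting the indices $k$ destroys the DFT relation between the two blocks.

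Your argument via Chebotarev's theorem (equivalently Tao's sharp uncertainty principle on $\mbb{Z}/d$) handles arbitrary supports and closes this gap, at the price of restricting to prime $d$. Your counterexample for $d=4$ is correct and in fact propagates to Theorem~\ref{thm:3}: the four states $\tfrac{1}{\sqrt2}(\ket{\varphi_0}\pm\ket{\varphi_2})$ and $\tfrac{1}{\sqrt2}(\ket{\varphi_1}\pm\ket{\varphi_3})$ each have coherence rank $4$ and form an ensemble for $\rho_4$, so $r_c(\rho_4)\le 4<d+1$. The fix you propose --- state and use the lemma only for prime $d$ --- is the right one and, as you note, costs nothing for the intended application since only $d\to\infty$ is needed.
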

\begin{proof}
We expand
\begin{equation}
\ket{\kappa}=\frac{1}{\sqrt{d+1}}\left(\sum_{k=0}^{d-1}a_k\ket{k}+\sum_{j,k=0}^{d-1}a_ke^{-ijk(2\pi)/d}\ket{d+j}\right).\notag
\end{equation}
Let us suppose, without loss of generality, that all the $a_k$ are zero except for the first $t$ elements, $\{a_0,a_1,\cdots,a_t\}$.  Then it is easy to see that
\begin{equation}
r_c(\ket{\kappa})=t+r_c\left(\sum_{k=0}^t\sum_{j=0}^{d-1} a_ke^{-ijk(2\pi)/d}\ket{d+j}\right).
\end{equation}
Observe that the coefficients of the vector in parenthesis are given by the column matrix
\[\vec{v}=\sqrt{d}U^{(t)}_{FT}\vec{a},\]
where $\vec{a}=[a_1,a_2,\cdots,a_t]^T$ and $U^{(t)}_{FT}$ is the first $t$ columns of the Fourier transform matrix, i.e. $[[U^{(t)}_{FT}]]_{j,k}=\frac{1}{\sqrt{d}}e^{-ijk(2\pi)/d}$.  Since $U_{FT}$ is invertible, the columns of $U^{(t)}_{FT}$ are linearly independent, and it follows that $\vec{v}$ cannot have more than $t-1$ vanishing elements.  Therefore $r_c\left(\sum_{k=0}^t\sum_{j=0}^{d-1} a_ke^{-ijk(2\pi)/d}\ket{d+j}\right)\geq d-(t-1)$ and so $r_c(\ket{\kappa})\geq d+1$.
\end{proof}
We next consider the uniform mixture
\begin{equation}
\rho_d:=\frac{1}{d}\sum_{k=0}^{d-1}\op{\varphi_k}{\varphi_k}.
\end{equation}
Its relevance to the study of coherence rank is given in the following.
\begin{theorem}
\label{thm:3}
For any $d\in\mbb{N}$, $r_c(\rho_d)=d+1$, and moreover the rank-increasing transformation $\phi_2\to\rho_d$ is possible by DIO
\end{theorem}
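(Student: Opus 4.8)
The plan is to prove the two assertions separately.

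\emph{The value $r_c(\rho_d)=d+1$.} The first observation is that the states $\ket{\varphi_k}$ are mutually orthonormal: the components $\ket{k}$ ($k=0,\dots,d-1$) are orthonormal and lie in a subspace orthogonal to $\mathrm{span}\{\ket{d},\dots,\ket{2d-1}\}$, while the vectors $\ket{\widetilde{d+k}}$ are orthonormal because the $d$-dimensional Fourier transform is unitary, so $\braket{\varphi_k}{\varphi_{k'}}=\frac{1}{d+1}\delta_{kk'}+\frac{d}{d+1}\delta_{kk'}=\delta_{kk'}$. Hence $\rho_d=\frac{1}{d}\Pi_V$ is $1/d$ times the orthogonal projector onto the $d$-dimensional space $V=\mathrm{span}\{\ket{\varphi_0},\dots,\ket{\varphi_{d-1}}\}$. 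For the lower bound I would use the standard fact that every pure state appearing in an ensemble realizing $\rho_d$ lies in $\mathrm{range}(\rho_d)=V$; such a state is therefore a nonzero linear combination of the $\ket{\varphi_k}$, so by Lemma~\ref{Lem:Coherence-number} its coherence rank is at least $d+1$. Taking the infimum over ensembles gives $r_c(\rho_d)\geq d+1$. The matching upper bound is immediate from the defining ensemble $\{\frac1d,\ket{\varphi_k}\}$, since each $\ket{\varphi_k}$ is supported on exactly the $d+1$ incoherent basis states $\ket{k},\ket{d},\ket{d+1},\dots,\ket{2d-1}$, so $r_c(\varphi_k)=d+1$ and hence $r_c(\rho_d)\leq\max_k r_c(\varphi_k)=d+1$.

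\emph{The transformation $\phi_2\to\rho_d$ by DIO.} Here I would apply Proposition~\ref{Prop:DIO-MaxCoherent} with the maximally coherent input taken to be $\phi_2$ (so the ``$d$'' of that proposition equals $2$): the transformation is possible by DIO if and only if $2\Delta(\rho_d)-\rho_d\geq 0$. The concrete task is to compute $\rho_d$ and $\Delta(\rho_d)$. Expanding $\rho_d=\frac1d\sum_k\op{\varphi_k}{\varphi_k}$ and summing the phases via $\sum_{k=0}^{d-1}e^{i(j-l)k2\pi/d}=d\,\delta_{jl}$, the block spanned by $\{\ket{d+j}\}$ collapses to a multiple of $\mbb{I}_d$, the block spanned by $\{\ket{k}\}$ is already diagonal, and the entire off-diagonal content sits in a single $d\times d$ coupling block $M$ with $[M]_{k,l}=\frac{1}{d(d+1)}e^{ikl2\pi/d}$. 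In block form $\rho_d=\begin{pmatrix}D_1 & M\\ M^\dagger & D_2\end{pmatrix}$ with $D_1=\frac{1}{d(d+1)}\mbb{I}_d$ and $D_2=\frac{1}{d+1}\mbb{I}_d$, so $\Delta(\rho_d)=\mathrm{diag}(D_1,D_2)$ and $2\Delta(\rho_d)-\rho_d=\begin{pmatrix}D_1 & -M\\ -M^\dagger & D_2\end{pmatrix}$. Writing $M=\frac{1}{d(d+1)}F$ with $[F]_{k,l}=e^{ikl2\pi/d}$ and $FF^\dagger=d\,\mbb{I}_d$ gives $MM^\dagger=\frac{1}{d(d+1)^2}\mbb{I}_d$, whence the Schur complement $D_1-MD_2^{-1}M^\dagger=D_1-(d+1)MM^\dagger=D_1-\frac{1}{d(d+1)}\mbb{I}_d=0\geq 0$. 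Since $D_2>0$, this shows $2\Delta(\rho_d)-\rho_d\geq 0$, and Proposition~\ref{Prop:DIO-MaxCoherent} guarantees the existence of the desired DIO map.

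I expect the main obstacle to be the bookkeeping in the second part: getting the phase sums right so that the $\{\ket{d+j}\}$ block becomes exactly diagonal, pinning down the normalization of the off-diagonal block $M$, and then noticing that the Schur complement vanishes identically. This borderline positivity is precisely what lets $\phi_2$ (rather than a larger $\phi_m$) suffice, and any slip in the constants would either destroy positivity or overshoot the required input dimension. On the coherence-number side the only delicate point is the standard range argument; once it is in place, Lemma~\ref{Lem:Coherence-number} together with the explicit support of each $\ket{\varphi_k}$ finishes the job with no further work.
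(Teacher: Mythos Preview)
Your proposal is correct, and it parallels the paper's proof in overall structure (lower bound via the range argument plus Lemma~\ref{Lem:Coherence-number}; DIO convertibility via Proposition~\ref{Prop:DIO-MaxCoherent}), but the execution differs in two places worth noting.

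For the upper bound $r_c(\rho_d)\leq d+1$, you simply observe that the defining states $\ket{\varphi_k}$ themselves already have coherence rank $d+1$. The paper instead constructs a second ensemble $\ket{\varphi_j'}=\sum_k u_{j,k}\ket{\varphi_k}$ by inverse Fourier transform and checks that each $\ket{\varphi_j'}$ has coherence rank $d+1$. Your route is shorter and entirely sufficient; the paper's alternative ensemble is not needed for the argument.

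For the positivity $2\Delta(\rho_d)-\rho_d\geq 0$, you compute $\rho_d$ in $2\times 2$ block form and verify the Schur complement vanishes. The paper avoids any explicit matrix computation: after noting (as you do) that the off-diagonal part $Z=\rho_d-\Delta(\rho_d)$ lives entirely in $\mathrm{span}\{\op{k}{d+j}\}$, it conjugates by the diagonal unitary $U_Z$ that multiplies each $\ket{d+j}$ by $-1$, obtaining $U_Z\rho_d U_Z^\dagger=\Delta(\rho_d)-Z=2\Delta(\rho_d)-\rho_d$, which is then automatically positive semidefinite because $\rho_d$ is. This sign-flip trick sidesteps the bookkeeping you flagged as the main obstacle, though your Schur-complement calculation is also correct and has the virtue of showing directly that the positivity is borderline (the complement is exactly zero).
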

\begin{proof}
From Lemma \ref{Lem:Coherence-number} it follows that $r_c(\rho_d)\geq d+1$ since any pure-state decomposition of $\rho_d$ must involve linear combinations of the $\ket{\varphi_k}$.  To show that this lower bound is tight,  we explicitly construct a pure state ensemble for $\rho_d$, each state having coherence number of exactly $d+1$.  Letting $u_{j,k}=\frac{1}{\sqrt{d}}e^{ijk(2\pi)/d}$ be the components of the inverse Fourier transform, it is straightforward to compute that
\begin{align}
\ket{\varphi_j'}:&=\sum_{k=0}^{d-1}u_{j,k}\ket{\varphi_k}\notag\\
&=\frac{1}{\sqrt{d(d+1)}}\left(\sum_{k=0}^{d-1}e^{ijk(2\pi)/d}\ket{k}+d\ket{d+j}\right).\notag
\end{align}
Hence $\rho_d=\frac{1}{d}\sum_{j=0}^{d-1}\op{\varphi_j'}{\varphi_j'}$ and so $r_c(\rho_d)=d+1$.  Turning to the convertibility under DIO, we write
\[\rho_d=\Delta(\rho_d)+Z\]
and observe that $Z$ lies in the linear span of the matrices $\{\op{k}{d+j}\}_{j,k=0}^{d-1}$.  Let $U_Z$ be the diagonal matrix that multiplies each $\ket{d+j}$ by an overall $-1$ phase and acts trivially on basis vectors $\ket{k}$; that is,
\begin{align}
U_Z\ket{\varphi_k}&=\frac{1}{\sqrt{d+1}}\left(\ket{k}-\sum_{j=0}^{d-1}e^{-ijk(2\pi)/d}\ket{d+j}\right)\notag\\
&=\frac{1}{\sqrt{d+1}}\ket{k}-\frac{\sqrt{d}}{\sqrt{d+1}}\ket{\widetilde{d+k}}.
\end{align}
As a result, we have
\begin{equation}
U_Z(\rho_d) U_Z^\dagger=\Delta(\rho_d)-Z\geq 0.
\end{equation}
By Proposition \ref{Prop:DIO-MaxCoherent}, the transformation $\Psi_2\to\rho_d$ can be accomplished by a DIO map.
\end{proof}

\section{Non-Additivity of the Schmidt Number and Maximally-Correlated Dephasing-Covariant Operations}

\label{Sect:MCDC}

The final task of this paper is to convert the previous results into analogous statements about entangled bipartite states.  Such a mapping between coherence and entanglement has been exploited previously in different contexts.  The basic observation is that every $d$-dimensional density matrix $\rho$ can be associated with a $d\otimes d$ \textit{maximally correlated} (MC) state $\wh{\rho}$ according to
\begin{equation}
\label{Eq:coherence-MC}
\rho=\sum_{i,j=1}^dc_{ij}\op{i}{j}\;\Leftrightarrow\;\wh{\rho}=\sum_{i,j=1}^d c_{ij}\op{ii}{jj}.
\end{equation}
Measures of coherence in $\rho$ translate into measures of entanglement in $\wh{\rho}$ \cite{Streltsov-2015a, Zhu-2017a}; in particular
\begin{equation}
r_c(\rho)=r_S(\wh{\rho}).
\end{equation}
With this correspondence, Theorems \ref{thm:1} and \ref{thm:2} imply that
\begin{align}
r_S(\wh{\omega}^{\otimes n}_\alpha)=2
\end{align}
for $0<\alpha\leq 2^{1/n}-1$, and
\begin{equation}
\left\lfloor\frac{1}{\log_2(\alpha+1)}\right\rfloor^{-1} \geq E_{c}^{(0,\infty)}(\wh{\omega}_\alpha)\geq \log_2(\alpha+1),
\end{equation}
for $0<\alpha<\sqrt{2}-1$, where
\begin{equation}
\wh{\omega}_{\alpha}=\frac{1-\alpha}{2}(\op{00}{00}+\op{11}{11})+\alpha\Phi_2.
\end{equation}
Interestingly, the Entanglement of Formation is additive for $\wh{\omega}_{\alpha}$ \cite{Vidal-2002c, Horodecki-2003b}, and thus the entanglement cost can be easily computed \cite{Wootters-1998a} as
\begin{equation}
E_c(\wh{\omega}_\alpha)=h\left(\frac{1}{2}(1-\sqrt{1-\alpha^2})\right),
\end{equation}
where $h(x)=-x\log_2 x-(1-x)\log_2(1-x)$.  Thus, whenever $\log_2(\alpha+1)$ is an integer, $\wh{\omega}_\alpha$ provides a rare example in which the zero-error entanglement cost, the regularized zero-error entanglement cost, and the asymptotic entanglement cost can all be computed.

As a remark, we note that the lower bound of Eq. \eqref{Eq:l1-bound} translates into
\begin{equation}
r_S(\wh{\rho})\geq 2\mc{N}(\wh{\rho})+1,
\end{equation}
where $\mc{N}(\wh{\rho})=\frac{1}{2}(||\wh{\rho}^\Gamma||-1)$ is the negativity of $\wh{\rho}$ and $\wh{\rho}^\Gamma$ denotes its partial transpose \cite{Vidal-2002b}.  This is due to the equivalence $||\rho||_{\ell_1}=||\wh{\rho}^\Gamma||_1$ \cite{Zhu-2017a, Chitambar-2018c}.  In fact, however, this bound has already been established by Eltschka \textit{et al.} using direct calculation \cite{Eltschka-2015a}.

A final question is whether there exists any statement similar to Theorem \ref{thm:3} in entanglement theory.  More precisely, does there exist a class of bipartite operations that cannot increase the pure-state Schmidt rank but can increase the mixed-state Schmidt rank by an arbitrarily large amount?  Such an operational class indeed exists, and it represents an extension of DIO to bipartite systems.  These have been called maximally-correlated dephasing-covariant (MCDC) operations \cite{Chitambar-2018c}, and it consists of local unitaries along with any map of the form $\wh{\mc{E}}=\tau\circ\mc{E}\circ\tau$, in which $\tau(\rho)=\int_{\mc{G}} dg (U_g\otimes U^*_g)\rho(U_g\otimes U^*_g)^\dagger$ is a twirling map with $\mc{G}=\{U_g\}_g$ being the group of unitary matrices diagonal in the incoherent basis $\{\ket{i}\}_{i=0}^{d-1}$, and $\wh{\mc{E}}$ is any CPTP map that commutes with the maximally-correlated dephasing map $\wh{\Delta}(\cdot)=\sum_{i=0}^{d-1}\op{ii}{ii}(\cdot)\op{ii}{ii}$ while acting invariantly on all uncorrelated states $\op{ij}{ij}$, $i\not=j$.  With this form, an operational correspondence is established
\begin{align}
\label{Eq:DIO-MCDC}
\Lambda(\rho)=\sigma\;\;\Leftrightarrow\;\;\wh{\Lambda}(\wh{\rho})=\wh{\sigma},
\end{align}
where $\Lambda\in \text{DIO}$ and $\wh{\Lambda}\in\text{MCDC}$.  Consequently, we have the following corollary to Theorem \ref{thm:3}.
\begin{corollary}
The Schmidt rank is monotonically decreasing under the MCDC transformation of pure states, yet it can be increased by an arbitrarily large amount when acting on mixed states.
\end{corollary}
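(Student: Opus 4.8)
The plan is to read off both halves of the statement from Theorem~\ref{thm:3} and the coherence--entanglement dictionary already in hand, namely the identity $r_c(\sigma)=r_S(\wh\sigma)$ together with the operational correspondence \eqref{Eq:DIO-MCDC}. No genuinely new computation should be required; the work is entirely in transporting pure states, the maximally coherent state, and DIO monotonicity across the map $\sigma\mapsto\wh\sigma$.

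For the rank-increasing half, I would start from the DIO transformation $\phi_2\to\rho_d$ supplied by Theorem~\ref{thm:3}. Since $\wh{\phi_2}=\Phi_2$ is a two-qubit maximally entangled state with $r_S(\Phi_2)=2$, while $\wh{\rho_d}$ is a maximally-correlated state with $r_S(\wh{\rho_d})=r_c(\rho_d)=d+1$, the correspondence \eqref{Eq:DIO-MCDC} turns this DIO map into an MCDC map effecting $\Phi_2\to\wh{\rho_d}$. This takes a state of Schmidt rank $2$ to a mixed state of Schmidt rank $d+1$, and letting $d\to\infty$ gives the claimed unbounded increase.

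For the monotonicity half, suppose $\ket{\Psi}\to\ket{T}$ is achievable by MCDC. Because local unitaries are MCDC operations, I would first pre- and post-compose with the local unitaries that bring $\ket{\Psi}$ and $\ket{T}$ into Schmidt form, $\sum_i\sqrt{\lambda_i}\ket{ii}$ and $\sum_i\sqrt{\mu_i}\ket{ii}$; the composite is still MCDC and now maps one maximally-correlated pure state to another, hence by \eqref{Eq:DIO-MCDC} it corresponds to a DIO map taking the coherent pure state $\ket{\phi}=\sum_i\sqrt{\lambda_i}\ket{i}$ to $\ket{\vartheta}=\sum_i\sqrt{\mu_i}\ket{i}$. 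The DIO pure-state monotonicity established above then gives $r_c(\phi)\ge r_c(\vartheta)$, and pulling this back through $r_c(\sigma)=r_S(\wh\sigma)$ (which is LU-invariant on both sides) yields $r_S(\ket{\Psi})=r_c(\phi)\ge r_c(\vartheta)=r_S(\ket{T})$.

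The main obstacle is the bookkeeping in this last step: one must verify that every MCDC pure-state transformation really can be presented, after absorbing suitable local unitaries, as the image under $\sigma\mapsto\wh\sigma$ of a DIO transformation between pure coherent states, so that the earlier DIO monotonicity corollary applies without modification. Once that reduction is in place, both halves are immediate transcriptions of Theorem~\ref{thm:3} and the preceding corollary.
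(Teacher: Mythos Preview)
Your proposal is correct and mirrors exactly what the paper does: it states the corollary as an immediate consequence of Theorem~\ref{thm:3}, the DIO pure-state monotonicity corollary, the identity $r_c(\sigma)=r_S(\wh\sigma)$, and the operational correspondence \eqref{Eq:DIO-MCDC}, with local unitaries used to reduce arbitrary pure states to maximally-correlated form. The paper in fact gives no proof at all beyond the word ``Consequently,'' so your write-up simply makes explicit the bookkeeping the paper leaves to the reader; the obstacle you flag (that the MCDC class must absorb the local-unitary pre/post-composition so that \eqref{Eq:DIO-MCDC} applies) is likewise left implicit in the paper.
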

What makes this finding particularly noteworthy is that MCDC operations are no weaker than LOCC for pure state convertibility.  That is, if $\ket{\psi}\xrightarrow{\text{LOCC}}\ket{\phi}$, then $\ket{\psi}\xrightarrow{\text{MCDC}}\ket{\phi}$.  This follows from the facts that (i) local unitaries belong to the class of MCDC (so every pure state can be put in the form $\ket{\wh{\psi}}=\sum_{i=0}^{d-1}c_i\ket{ii}$), (ii) Eq. \eqref{Eq:DIO-MCDC}, (iii) the majorization of Schmidt coefficients is necessary and sufficient for LOCC convertibility of pure states, and (iv) the majorization of coherent amplitudes is sufficient for DIO convertibility of pure states.  It remains an open question of whether the power of MCDC is actually equivalent to LOCC for pure state transformations.  This question is equivalent to whether the majorization condition is also necessary for DIO convertibility.

\section{Conclusion}

This paper has investigated certain behavior of the Schmidt number using tools of quantum coherence theory.  In particular, the Schmidt number has been found to be highly non-additive, even for simple two-qubit states.  In bipartite systems, the Schmidt number provides the unique measure of stochastic LOCC (SLOCC) entanglement classification, meaning that two states $\ket{\psi}$ and $\ket{\phi}$ can be reversibly transformed from one to the other using SLOCC if and only if they have the same Schmidt number \cite{Dur-2000a}.  We have identified a class of operations (MCDC) that preserves this SLOCC entanglement classification under pure state transformations.  Nevertheless, when converting mixed states, these operations can increase Schmidt number by large amounts.  This demonstrates that the Schmidt number behaves quite differently in pure states versus mixed states, particularly when transforming states beyond the LOCC framework.

A central focus in this paper has been computing the regularized zero-error coherence cost of $\omega_\alpha$. For the situation of $\alpha\leq\sqrt{2}-1$, we have obtained near-tight lower and upper bounds.  We suspect that similar results can be obtained by performing an analogous combinatorial analysis on the multi-copy level.  We leave this for future work.  The techniques in this paper can be applied to the $d$-dimensional generalization of $\omega_\alpha$, and strong non-additivity can likewise be observed.  This can be used to compute lower bounds on the Schmidt number of maximally-correlated states of any size.



\medskip

\noindent\textbf{Acknowledgments:}  After the presentation of these results at the Seefeld quantum information workshop, we became aware of independent work by Bartosz Regula, Ludovico Lami, and Gerardo Adesso that obtains Theorems 1 and 2 using different techniques.  Qiuling Yue is supported by China Scholarship Council.  This work is also supported by the National Science Foundation (NSF) Early CAREER Award No. 1352326.

\bibliography{Schmidt_Number}
\end{document}